\documentclass[journal
,draftcls, 11pt, onecolumn
]{IEEEtran}

\addtolength{\topmargin}{7mm}

\interdisplaylinepenalty=2500 

\setlength{\textheight}{22cm}
\setlength{\textwidth}{16cm}
\setlength{\oddsidemargin}{0.5cm}
\setlength{\evensidemargin}{0.5cm}
\hyphenation{op-tical net-works semi-conduc-tor}
\usepackage[utf8]{inputenc}
\hyphenation{op-tical net-works semi-conduc-tor}
\usepackage{amsmath,amssymb,amsfonts}
\usepackage{mathtools}
\usepackage{amsthm}
\usepackage{algorithmic}
\usepackage{graphicx}
\usepackage{textcomp}
\usepackage{tikz}
\usepackage{caption}
\usepackage{cuted}
\usepackage{pgfgantt}
\usepackage{pdflscape}
\usepackage{pst-plot}
\usepackage{comment} 
\usepackage{cases}

\usepackage{lineno,hyperref}
\usetikzlibrary{spy}
\usetikzlibrary{positioning,calc}
\usetikzlibrary{decorations.pathmorphing,calc,shapes,shapes.geometric,patterns}
\usetikzlibrary{shapes.multipart}
\usepackage{xfrac}
\usepackage{colortbl}
\usepackage{cancel} 
\usetikzlibrary{arrows,positioning,calc,intersections}
\usetikzlibrary{datavisualization.formats.functions}
\def\BibTeX{{\rm B\kern-.05em{\sc i\kern-.025em b}\kern-.08em
    T\kern-.1667em\lower.7ex\hbox{E}\kern-.125emX}}

\usepackage{pgfplots}
\usepgfplotslibrary{fillbetween}
\usetikzlibrary{arrows, decorations.markings}

\newtheorem{theorem}{Theorem}
\newtheorem*{theorem*}{Theorem}
\newtheorem{lemma}[theorem]{Lemma}

\newtheorem{corollary}[theorem]{Corollary}

\theoremstyle{definition}
\newtheorem{definition}[theorem]{Definition}

\theoremstyle{remark}
\newtheorem{remark}[theorem]{Remark}

\newcommand{\mbb}{\mathbb}
\newcommand{\mc}{\mathcal}

\newcommand{\seta}{\ensuremath{\mathcal{A}}}

\newcommand{\setd}{\ensuremath{\mathcal{D}}}

\newcommand{\sete}{\ensuremath{\mathcal{E}}}




\newcommand{\circlearrow}{}
\DeclareRobustCommand{\circlearrow}{%
  \mathrel{\vphantom{\rightarrow}\mathpalette\circle@arrow\relax}%
}
\newcommand{\circle@arrow}[2]{%
  \m@th
  \ooalign{%
    \hidewidth$#1\circ\mkern1mu$\hidewidth\cr
    $#1-$\cr}%
}

\begin{document}
\title{Identification over Additive Noise Channels in the Presence of Feedback} 

\author{Moritz Wiese, Wafa Labidi, Christian Deppe and Holger Boche
\thanks{M.~Wiese is with Technical University of Munich, Chair of Theoretical Information Technology, D-80333 Munich, Germany, and with CASA: Cyber Security in the Age of Large-Scale Adversaries Exzellenzcluster, Ruhr-Universit\"at Bochum, D-44780 Bochum, Germany. Email: wiese@tum.de}
\thanks{W.~Labidi is with Technical University of Munich, Chair of Theoretical Information Technology, D-80333 Munich, Germany. Email: wafa.labidi@tum.de}
\thanks{H.~Boche is with Technical University of Munich, Chair of Theoretical Information Technology, D-80333 Munich, Germany, the BMBF Research
Hub 6G-life, and with CASA: Cyber Security in the Age of Large-Scale Adversaries Exzellenzcluster, Ruhr-Universit\"at Bochum, D-44780 Bochum, Germany. Email: boche@tum.de}
\thanks{C.~Deppe is with Technical University of Munich, Institute for Communications Engineering, D-80333 Munich, Germany. Email: christian.deppe@tum.de}
\thanks{This work has been presented in part at the virtual IEEE International Symposium on Information Theory (ISIT) 2021.}
}

\maketitle

\begin{abstract}
We analyze deterministic message identification via channels with non-discrete additive white noise and with a noiseless feedback link under both average power and peak power constraints. The identification task is part of Post Shannon Theory. The consideration of communication systems beyond Shannon's approach is useful in order to increase the efficiency of information transmission for certain applications. We propose a coding scheme that first generates infinite common randomness between the sender and the receiver. If the channel has a positive message transmission feedback capacity, for given error thresholds and sufficiently large blocklength this common randomness is then used to construct arbitrarily large deterministic identification codes. In particular, the deterministic identification feedback capacity is infinite regardless of the scaling (exponential, doubly exponential, etc.) chosen for the capacity definition. Clearly, if randomized encoding is allowed in addition to the use of feedback, these results continue to hold.
\end{abstract}

\begin{IEEEkeywords}
Identification theory, feedback, common randomness, additive noise channels
\end{IEEEkeywords}

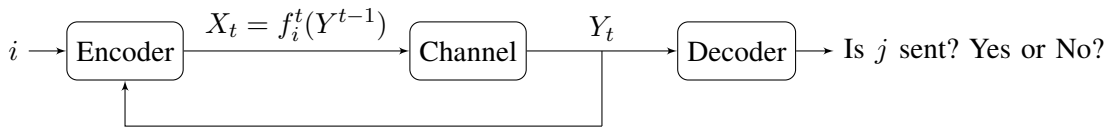
\begin{figure*}
\centering
\tikzstyle{block} = [draw, rectangle, rounded corners,
minimum height=2em, minimum width=2.5em]
\tikzstyle{input} = [coordinate]
\tikzstyle{sum} = [draw, circle,inner sep=0pt, minimum size=2mm,  thick]
\tikzstyle{arrow}=[draw,->] 
\begin{tikzpicture}[auto, node distance=2cm,>=latex']
\node[] (M) {$i$};
\node[block,right=.5cm of M] (enc) {Encoder};
\node[block, right=3cm of enc] (channel) {Channel};
\node[block, right=2cm of channel] (dec) {Decoder};
\node[right=.5cm of dec] (Mhat) {Is $j$ sent? Yes or No?};
\node[input,right=1cm of channel] (t1) {};
\node[input,below=1cm of t1] (t2) {};
\draw[->] (M) -- (enc);
\draw[->] (enc) --node[above]{ $X_t=f_i^t(Y^{t-1})$} (channel);
\draw[->] (channel) --node[above]{$Y_t$} (dec);
\draw[->] (dec) -- (Mhat);
\draw[-] (t1) -- (t2);
\draw[->] (t2) -| (enc);
\end{tikzpicture}
\caption{Discrete-time memoryless channel with noiseless feedback}
\label{fig:new_channelModel}
\end{figure*}

\section{Introduction}
New applications in modern communications demand robust and ultra-reliable low latency information exchange such as machine-to-machine and human-to-machine communications \cite{CompoundChannel}, the tactile internet \cite{tactileInternet}, digital watermarking \cite{MOULINwatermarking,AhlswedeWatermarking,SteinbergWatermarking}, health care, industry 4.0, etc. Novel communication tasks like molecular communication \cite{molecularCommunication} also pose new challenges. For many of these applications, the identification approach suggested by Ahlswede and Dueck \cite{AhlDueck} in 1989 is much more suitable than the classical message transmission scheme proposed by Shannon \cite{Shannon}. For this reason, intensive research has been started recently \cite{FettwBoche_6G}, \cite{FitzBoch_Netwk} in order to investigate the potential of identification and related communication tasks for application in future communication systems and to find the corresponding capacity characterizations as well as efficient coding schemes for practically relevant communication scenarios.

In the classical message transmission scheme, the encoder transmits a message over a channel. At the receiver side, the decoder aims to estimate this message based on the channel observation. In contrast, in the identification scheme, the sender and the receiver are given an \textit{identity} each. The sender encodes his identity, and the receiver needs to check whether or not the identities coincide.

The identification problem can be regarded as the task of performing many hypothesis tests simultaneously. The starting point of the theory of identification which immediately sparked great interest in this new paradigm was the result of \cite{AhlDueck} for Discrete Memoryless Channels (DMCs) which states that the size of identification codes grows doubly exponentially fast with the blocklength, if randomized encoding is allowed. This is dramatically different from the classical message transmission, where the number of messages that can be reliably communicated over the channel is exponential in the blocklength. However, in this result, randomized encoding is essential in order to achieve the double exponential growth. In the deterministic setup, the number of messages that can be identified over a DMC only scales exponentially with the blocklength \cite{deterministicDMC,deterministicFading,IDwithoutRandom}. However, in the case of deterministic encoding, the rate is still larger than the message transmission rate in the exponential scale. Apart from these gains, other communication scenarios such as correlation-assisted identification \cite{correlation}, secure correlation-assisted identification \cite{globecom} as well as identification in the presence of feedback \cite{IDFeedback,FeedbackBehavior} show that the identification problem produces completely new effects compared to Shannon's message transmission problem.

The availability of a feedback channel has been shown not to increase the Shannon capacity of a DMC, even if the feedback is noiseless and has unlimited capacity \cite{Shannon56}. However, it can help greatly in reducing the complexity of encoding or decoding \cite{ConstructiveProofAhlswede}. Furthermore, it has been proved in \cite{MACtransmissionFeedback,DUECK19801,KramerFeedback} that feedback increases the capacities of discrete memoryless multiple-access channels as well as discrete memoryless broadcast channels. The authors of \cite{Ahlswede2006} pointed out that the noiseless feedback can be used to generate a secret key shared only between the transmitter and the legitimate receiver. 

The combination of identification with noiseless feedback was studied by Ahlswede and Dueck \cite{IDFeedback} when the channel is a DMC. They showed that, even in the case of deterministic encoding, feedback allows the number of identities to grow doubly exponentially in the blocklength. The feedback allows us to set up a common randomness experiment shared by the sender and the receiver which can be used to construct an efficient identification code. The amount of correlated randomness determines the doubly-exponential growth rate of the identification capacity and is given by the maximal entropy of any output distribution which can be generated over the channel. If in addition, it is allowed to use randomized encoding, the capacity grows even larger, although still on the doubly-exponential scale. These results are special cases of the fact that the identification capacity on the doubly-exponential scale of a DMC coincides with the capacity of common randomness \cite{Ahlswede2006}.

Other work on identification in the presence of feedback has focused on channels with finite input and output alphabets. Identification via discrete arbitrarily varying channels (AVC) with noiseless feedback was investigated in \cite{Ahlswede2000}. Identification over discrete multi-way channels with complete feedback was presented in \cite{ID_Multiway_Feedback}. In \cite{GeneralTheory}, Ahlswede established a unified theory of identification via channels with finite input and output alphabets in the presence of noisy feedback. Secure identification over the discrete memoryless wiretap channel in the presence of secure feedback was studied in \cite{Ahlswede2006}. 

Only a few studies \cite{HanBook,Burnashev,deterministicFading,icassp_paper,globecom} have explored identification for continuous alphabets, although such channels are highly relevant for applications such as those mentioned in \cite{FettwBoche_6G}. We are concerned with channels with \emph{non-discrete additive white noise}. The most prominent example of such a channel is the channel with additive white Gaussian noise (AWGN), which is practically relevant in wired and wireless communications, satellite and deep space communication links, etc. Unusual phenomena are encountered when we extend the identification problem from the DMC case to the case of continuous alphabets. For instance, the maximum size of deterministic identification codes without feedback for the AWGN channel scales as $n^{Rn}$ for some positive $R$ as the blocklength $n$ ends to infinity, which is neither singly- nor doubly-exponential \cite{deterministicFading}. But like in the DMC case, the size of identification codes without feedback scales doubly exponentially fast in the blocklength if randomized encoding is permitted \cite{Burnashev,MasterThesis}.

Although identification with feedback currently is an active research area, no results have yet been established for continuous alphabets in the presence of feedback. We determine the identification capacity of channels with non-discrete additive white noise in the presence of noiseless feedback for the case of deterministic encoding. See Fig. \ref{fig:new_channelModel} for an illustration of the problem setup. While the average power constraint provides analytical tractability, real-life systems are limited in their peak power. In our case, we are able to find the identification feedback capacity not only subject to an average power constraint, but also that subject to peak power constraint. 

In fact, we find even more than a capacity result. If the channel with non-discrete additive white noise has a positive message transmission capacity under the given power constraint, for given error thresholds and sufficiently large blocklength we construct arbitrarily large deterministic identification codes. In other words, only the error thresholds determine the minimum necessary blocklength, and once this requirement is met, the size of the identification code can be chosen independently of the blocklength. Consequently, no matter with respect to which scale (exponential, doubly exponential, \ldots) one defines the deterministic identification feedback capacity, the capacity will always be positive. We also formalize this "capacity-theoretic" point of view. Clearly, if randomized encoding is allowed in addition to the use of feedback, these results continue to hold. For this reason, we will not formally introduce or discuss the case of randomized encoding together with perfect feedback. Our result is yet another example where the identification task shows a completely different behavior than Shannon's message transmission task.

The reason for our surprising result is that feedback allows the sender and the receiver to establish a shared random experiment on an arbitrarily large finite set. This permits us to make the error probabilities of the second kind arbitrarily small, which describe the probability that the receiver wrongly decides that his identity is the same as the sender's. The decision sets corresponding to different identities overlap, and using the arbitrarily large common randomness, one can devise arbitrarily strong "challenges" which help to distinguish these decision sets.


In Section \ref{sec:preliminaries}, we introduce our system model, present the main results of the paper and provide a comparative discussion of related work.
In Section \ref{sec: ID_GaussianChannel_Feedback}, we provide a coding scheme that generates infinite common randomness between the sender and the receiver in our system model and prove the main results of the paper. Section \ref{sec: conclusions} contains concluding remarks and proposes potential future research in this field.

\section{System Model and Main Result}\label{sec:preliminaries}

In this section, we introduce the notation that will be used throughout the paper. We introduce our system model and present the main result of the paper. We conclude with a discussion of the result and a comparison with related and similar results.

\subsection{Notation}

The letter $\mbb R$ denotes the set of real numbers. By $\log$, we mean the logarithm with base 2, whereas $\ln$ is the natural logarithm.

\subsection{System Model}

\subsubsection{Channels with additive noise}

The channels over which we want to perform identification are discrete-time memoryless channels with additive white noise, with real inputs and outputs. An input $x\in\mbb R$ results in an output variable $Y$ of the form
\[
    Y=x+Z,
\]
where $Z$ is the additive noise whose distribution $P$ is independent of $x$. Our strategy achieving an infinite identification capacity does not apply to arbitrary noise distributions. The admissible types of noise can be described using the Lebesgue decomposition of probability measures. In order to formulate the Lebesgue decomposition, we need to define three properties of probability measures. 

\begin{definition}
Let $P$ be a probability distribution on $\mbb R$ with cumulative density function (cdf) $F$.
\begin{enumerate}
    \item $P$ and $F$ are called \emph{absolutely continuous} if there exists a measurable function $f(x)$ such that
\begin{equation*}
    P[\mc A]=\int_{\seta} f(x) \ dx
\end{equation*}
for all measurable $\mc A$. The function $f$ is called the \textit{probability density function (pdf)} of the random variable $X$.

    \item $P$ and $F$ are called \emph{discrete} if there exists a measurable set $\mc A\subset\mbb R$ such that $P(\mc A)=1$ and such that $P[\{x\}]>0$ for all $x\in\mc A$.
    
    \item $P$ and $F$ are called \emph{singular continuous} if $F$ is continuous and $P$ is singular with respect to the Lebesgue measure (i.e., there exists a measurable $\mc A\subset\mbb R$ such that $P[\mc A]=1$ and $\int_{\mc A}dx=0$).
\end{enumerate}
\end{definition}

Note that the three properties defined above are mutually exclusive. A given probability distribution does not need to have any of these properties. However, by Lebesgue decomposition, any probability distribution can be decomposed into an absolutely continuous, a discrete and a singular continuous part.

\begin{lemma}[Lebesgue decomposition. See, e.g., \cite{K_Formin}]
    Let $F$ be a cumulative distribution function. Then there exists a triple of cumulative distribution functions $(D, A, S)$ such that $D$ is discrete, $A$ is absolutely continuous and $S$ is singular continuous, and nonnegative numbers $p_D,p_A,p_S$ satisfying $p_D+p_A+p_S=1$, such that
    \[
        F=p_DD+p_AA+p_SS.
    \]
\end{lemma}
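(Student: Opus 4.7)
The plan is to construct the decomposition in two stages: first peel off the atomic part of $F$ by enumerating its jumps, then apply the classical Lebesgue decomposition theorem (with respect to Lebesgue measure $\lambda$) to the remaining continuous piece to split it into an absolutely continuous and a singular continuous component.

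First I would identify the at-most-countable set $J=\{x\in\mbb R: F(x)>F(x^-)\}$ of discontinuities of $F$ and set $p_D:=\sum_{x\in J}(F(x)-F(x^-))\in[0,1]$. Define the nondecreasing right-continuous step function $\tilde D(t):=\sum_{x\in J,\,x\le t}(F(x)-F(x^-))$; if $p_D>0$, set $D:=\tilde D/p_D$, otherwise take $D$ to be any fixed discrete cdf (its weight is zero). Setting $G:=F-\tilde D$, one checks that $G$ is nondecreasing (the jumps of $F$ and $\tilde D$ agree on $J$, while away from $J$ the increments of $\tilde D$ vanish), right-continuous, with $G(-\infty)=0$ and $G(+\infty)=1-p_D$, and moreover continuous everywhere because $\tilde D$ absorbs exactly the jumps of $F$.

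Next I would regard $G$ as the distribution function of a finite continuous Borel measure $\mu_G$ on $\mbb R$ and invoke the Lebesgue decomposition theorem (a consequence of Radon-Nikodym as in \cite{K_Formin}) to write $\mu_G=\mu_{ac}+\mu_s$ with $\mu_{ac}\ll\lambda$ and $\mu_s\perp\lambda$. Put $p_A:=\mu_{ac}(\mbb R)$ and $p_S:=\mu_s(\mbb R)$, so that $p_A+p_S=1-p_D$ and hence $p_D+p_A+p_S=1$. Define $A$ and $S$ as the cdfs of the normalized measures $\mu_{ac}/p_A$ and $\mu_s/p_S$, choosing arbitrary absolutely continuous or singular continuous cdfs whenever the corresponding weight vanishes. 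The identity $F=p_DD+p_AA+p_SS$ then follows by reassembling the pieces.

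The only subtle point, and what I would flag as the main thing to verify rather than a real obstacle, is that $S$ is singular \emph{continuous} and not merely singular: this is immediate because $\mu_s$ is dominated by the continuous measure $\mu_G$, so $\mu_s(\{x\})\le\mu_G(\{x\})=0$ for every $x\in\mbb R$, which is exactly continuity of $S$. Everything else is bookkeeping; the nontrivial analytic input is the Lebesgue decomposition theorem for finite Borel measures, which we are permitted to cite.
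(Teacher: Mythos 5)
Your argument is correct: peeling off the jump part $\tilde D$, checking that the remainder $G$ is a continuous nondecreasing function, applying the Lebesgue decomposition of $\mu_G$ with respect to Lebesgue measure, and observing that the singular part inherits continuity from $\mu_G$ (so $S$ is singular \emph{continuous}, the one genuinely delicate point, which you handle correctly) is exactly the classical textbook proof. The paper itself gives no proof of this lemma and simply cites the literature, so your write-up matches the intended (cited) argument; the only detail worth making explicit is why $G$ is nondecreasing, namely that the sum of the jumps of $F$ over any interval $(s,t]$ is at most $F(t)-F(s)$.
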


We can now define the relevant properties of the distribution $P$ of the channel noise.

\begin{definition}
    Let $P$ be a probability measure with cdf $F$. 
    \begin{enumerate}
        \item If $F$ is continuous, which means that there exist an absolutely continuous cdf A and a singular cdf S such that $F=p_AA+p_SS$ for some nonnegative $p_A,p_S$ with $p_A+p_S=1$, then also $P$ is called \textit{continuous}.
        \item If $F$ is not discrete, then $F$ and $P$ are called \textit{non-discrete}.
    \end{enumerate}
\end{definition}

Our main result on the achievability of infinite identification capacity with perfect feedback holds for channels with additive noise whose distribution is non-discrete. The reason why we need non-discrete noise is that for such noise, the discrete part can be removed by suitable conditioning. What remains is distributed according to a continuous distribution and can therefore be transformed into a uniform distribution on an arbitrarily large finite set. This is the key reason for the achievability of infinite identification feedback capacity.

We can now define the type of channels considered in this work.

\begin{definition}
\label{def_channelModel}
 Let $P$ be a non-discrete probability distribution. The \textit{discrete-time memoryless channel $W_P$ with non-discrete additive white noise} has real inputs and outputs. An input $(x_1,\ldots,x_n)\in\mbb R^n$ of length $n$ generates the output random vector
 \begin{equation}\label{eq:n_channel_uses}
     (Y_1,\ldots,Y_n)=(x_1,\ldots,x_n)+(Z_1,\ldots,Z_n)
 \end{equation}
 of the same length, where $Z_1,\ldots,Z_n$ are i.i.d.\ copies of a noise random variable $Z$ with probability distribution $P$. The probability distribution of the random vector \eqref{eq:n_channel_uses} will be denoted by $W_P^n(\cdot\vert x_1,\ldots,x_n)$.
\end{definition}

For later application, we now have a closer look at the output distribution $W_P(\cdot\vert x)$ generated by the channel input $x\in\mbb R$. First of all, by \eqref{eq:n_channel_uses}, it is a shifted version of the noise distribution $P$. More precisely, for any measurable $\mc A\subset\mbb R$,
\[
    W_P(\mc A\vert x)=P(\mc A-x),
\]
where $\mc A-x=\{y:y+x\in\mc A\}$. Moreover, it is well-known that any probability distribution on the reals is associated with an integral operator. For instance, for any measurable $\mc A\subset\mbb R$, we can write
\[
    W_P(\mc A\vert x)=\int_{\mc A}W_P(dy\vert x).
\]
More generally, the expectation of the measurable function $g$ with respect to the probability measure $W_P(\cdot\vert x)$ is given by
\begin{equation}\label{eq:W_P_exp}
    \int g(y) W_P(dy\vert x).
\end{equation}

\begin{remark}\label{rem:abscont}
Note that these integrals can only be written as integrals with respect to the Lebesgue measure if $W_P(\cdot\vert x)$ is absolutely continuous, which is the case if and only if $P$ is absolutely continuous. If $P$ is absolutely continuous with the pdf $f$, then 
\[
    W_P(\mc A\vert x)=\int_{\mc A} f(y-x)\,dy.
\]
The standard example of an absolutely continuous probability measure is the Gaussian distribution, and the corresponding channel is the AWGN channel.
\end{remark}

\subsubsection{Identification feedback codes}

We assume that perfect feedback is available to the sender. This means that after the transmission of every symbol, the sender obtains a perfect copy of the received symbol and can use this knowledge when choosing the next symbol to send. The availability of perfect feedback is reflected by the form of the encoding functions which will be used in identification feedback codes. 

\begin{definition}
 A \textit{feedback encoding function of length $n$} is a family $f=(f^1,\ldots,f^n)$ of $n$ real-valued functions satisfying
\begin{align*}
    f^1 & \in\mbb R,\\ 
    f^2    & \colon \mbb R\to\mbb R, \\
    \vdots & \\
    f^n & \colon \mbb R^{n-1} \longrightarrow \mbb R.
\end{align*}
For any $\Gamma\geq 0$, the set of feedback encoding functions satisfying the average power constraint
\begin{align*}
    &\sum_{t=1}^{n} (f^t(y_1,\ldots,y_{t-1}))^2 \leq n \Gamma
    \qquad\text{for all }(y_1,\ldots,y_{n-1})\in\mbb R^{n-1}
\end{align*}
is denoted by $\overline{\mc F}_{n,\Gamma}$. The set of feedback encoding functions satisfying the peak power constraint
\begin{align*}
    &|f^t(y_1,\ldots,y_{t-1})| \leq  \Gamma\qquad  \text{for all } t\in \{1,\ldots,n\}
    \text{ and all }(y_1,\ldots,y_{t-1})\in\mbb R^{t-1}
\end{align*}
is denoted by $\mc F_{n,\Gamma}$.
\end{definition}

In an identification feedback code with $N$ identities, any identity $i$ will be associated with a feedback encoding function $f_i$. If the blocklength is $n$, then the sender starts by sending the real number $f_i^1$. From the perfect feedback, it obtains the receiver's first channel output $y_1$ and sends $f_i^2(y_1)$ in the second channel use. This continues until finally, at the $n$-th channel use, the sender knows all previous channel outputs $y_1,\ldots,y_{n-1}$ and sends the symbol $f_i^n(y_1,\ldots,y_{n-1})$.

We want to formalize this by defining deterministic identification feedback codes. Before we can do this, we need to describe the output probability $W_P^n(\cdot\vert f)$ generated by any feedback encoding function $f$ of length $n$. This is more complicated than for the transmission of a simple symbol sequence $(x_1,\ldots,x_n)$ because the feedback has to be taken into account. 

Recall that any probability distribution on $\mbb R^n$ is characterized by the values it assumes on product sets \cite[pp.~144f.]{Shiryaev}. Thus it is sufficient to describe $W_P^n(\mc A_1\times\cdots\times\mc A_n\vert f)$ for measurable sets $\mc A_1,\ldots,\mc A_n\subset\mbb R$. Proceeding by induction over $n$, it is not hard to see that
\begin{align*}
    &W_P^n(\mc A_1\times\dots\times\mc A_n\vert f)\\
    &=\int_{\mc A_1}\int_{\mc A_2}\cdots\int_{\mc A_n}W_P\bigl(dy_n\vert f^{n-1}(y_1,\ldots,y_{n-1})\bigr)\cdots W_P\bigl(dy_2\vert f^2(y_1)\bigr)W_P\bigl(dy_1\vert f^1\bigr)
\end{align*}
(recall \eqref{eq:W_P_exp}). This characterization now permits us to also write $W_P^n(\mc D\vert f)$ for arbitrary measurable sets $\mc D\subset\mbb R^n$.

\begin{definition}\label{Def:strategy_f}
Let $W_P$ be a channel with non-discrete additive white noise. Let $n,N$ be positive integers and $\lambda_1,\lambda_2$ nonnegative reals satisfying $\lambda_1+\lambda_2<1$. 

\begin{enumerate}
    \item An $(n,N,\lambda_1,\lambda_2)$ \textit{deterministic identification feedback code for $W_P$ with average power constraint $\Gamma\geq 0$} is a family of pairs $\left \{(f_i,\setd_i):i=1,\ldots,N \right\}$ with
\begin{align*} 
    &f_i=(f_i^1,f_i^2\ldots,f_i^n) \in \overline{\mc F}_{n,\Gamma},\quad\mathcal D_{i} \subset \mbb R^{n}\qquad\text{for all }i\in \{1,\ldots,N\} 
\end{align*}
satisfying
\begin{align}
\mu_1^{(i)} \triangleq W_P^n(\setd_i^c|f_i) & \leq \lambda_1 \quad  \text{for all } i, \label{error1}\\
\mu_2^{(i,j)} \triangleq W_P^n(\setd_j|f_i) & \leq \lambda_2 \quad \text{for all }i\neq j. \label{error2}
\end{align}
    
    \item An $(n,N,\lambda_1,\lambda_2)$ \textit{deterministic identification feedback code for $W_P$ with peak power constraint $\Gamma\geq 0$} is defined in an analogous manner but with all feedback encoding functions $f_i$ contained in $\mc F_{n,\Gamma}$ instead of $\overline{\mc F}_{n,\Gamma}$.
\end{enumerate}
\end{definition}

Before we discuss the meaning of \eqref{error1} and \eqref{error2}, let us first recall the definition of a traditional message transmission feedback code. 

\begin{definition}\label{Def:transmission_feedback}
Let $W_P$ be a channel with non-discrete additive white noise. Let $n,M$ be positive integers and $\lambda\geq 0$. An $(n,M,\lambda)$ \textit{deterministic message transmission feedback code for $W_P$ with average power constraint $\Gamma\geq 0$} is a family of pairs $\left \{(f_i,\setd_i):i=1,\ldots,M \right\}$ with
\begin{align*} 
    &f_i=(f_i^1,f_i^2\ldots,f_i^n) \in \overline{\mc F}_{n,\Gamma},\quad\mathcal D_{i} \subset \mbb R^{n}\qquad\text{ for all }i\in \{1,\ldots,M\},\\
    &\mc D_i\cap\mc D_j=\varnothing\quad \text{ for }i\neq j 
\end{align*}
satisfying
\begin{align}\label{eq:transmission_error}
    W_P^n(\setd_i^c|f_i) & \leq \lambda \quad  \text{for all } i. 
\end{align}
If every $f_i$ is contained in $\mc F_{n,\Gamma}$, then the code is an $(n,M,\lambda)$ \textit{deterministic message transmission feedback code for $W_P$ with peak power constraint $\Gamma$}.
\end{definition}

Definition \ref{Def:transmission_feedback} is well-known. The main difference between a message transmission feedback code and an identification feedback code is that the former also requires the decoding sets to be disjoint. The absence of this requirement introduces new kinds of error events for identification codes. The message transmission error probability \eqref{eq:transmission_error} is analogous to the error probability $\mu_1^{(i)}$ defined in \eqref{error1}. We call $\mu_1^{(i)}$ a \textit{probability of error of the first kind}. 

For a pair $(i,j)$ of distinct identities, the $\mu_2^{(i,j)}$ defined in \eqref{error2} is called a \textit{probability of error of the second kind}, and gives the probability that the decoder decides for identity $j$ when in fact identity $i$ was sent. In traditional message transmission, an error of the first kind automatically results in an error of the second kind, and vice versa. This is due to the requirement of disjoint decoding sets which is present in message transmission codes. Since the decoding sets of an identification code do not need to be disjoint, an error of the second kind can result from the overlapping of the decoding sets and does not need to be accompanied by an error of the first kind. 

The last preparation for the statement of our main results is the definition of message transmission feedback capacity. We will see why we do not need to define an identification feedback capacity, or at least, why it does not make any sense to define a \textit{single} such capacity, when we state the results. 

\begin{definition}
 Let $W_P$ be a channel with non-discrete additive white noise and $\Gamma\geq 0$. A number $R\geq 0$ is called an \textit{achievable message transmission feedback rate with (peak/average) power constraint $\Gamma$} if for every $\lambda,\delta>0$ and sufficiently large blocklength $n$ there exists an $(n,M,\lambda)$ message transmission feedback code for $W_P$ with (peak/average) power constraint $\Gamma$ satisfying 
 \[
    \frac{\log M}{n}\geq R-\delta.
 \]
 The maximal achievable message transmission feedback rate with power constraint $\Gamma$ is called the \textit{message transmission feedback capacity of $W_P$ with (peak/average) power constraint $\Gamma$}.
\end{definition}

\subsection{Main Results}

The main result of the paper is the following theorem. It holds for both peak and average power constraints; we will see that it is irrelevant for the proof strategy which type of constraints is used. This is good news, since peak power constraints are found frequently in practical applications, whereas average power constraints usually are much more amenable to analysis.

\begin{theorem}\label{theorem:new_main_theorem} 
Let $\lambda\in(0,\frac{1}{2})$, $\Gamma>0$ and $W_P$ a channel with non-discrete additive white noise.

\begin{enumerate}
    \item If $W_P$ has positive message transmission feedback capacity with average power constraint $\Gamma$, then there exists a blocklength $n_s$ such that for every positive integer $N$ and every $n\geq n_s$ there exists a deterministic $(n,N,\lambda_1,\lambda_2)$ identification feedback code for $W_{P}$ with average power constraint $\Gamma$ and with $\lambda_1,\lambda_2\leq\lambda$.
    \item If $W_P$ has positive message transmission feedback capacity with peak power constraint $\Gamma$, then there exists a blocklength $n_s$ such that for every positive integer $N$ and every $n\geq n_s$ there exists a deterministic $(n,N,\lambda_1,\lambda_2)$ identification feedback code for $W_{P}$ with peak power constraint $\Gamma$ and with $\lambda_1,\lambda_2\leq\lambda$.
\end{enumerate}
\end{theorem}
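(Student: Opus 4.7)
The plan is to split the $n$ channel uses into a short common-randomness generation phase of fixed length $k$, followed by a tag transmission phase that uses a standard message transmission feedback code. Since the noise distribution $P$ is non-discrete, its atom set $\mc N$ satisfies $q:=P(\mbb R\setminus\mc N)=p_A+p_S>0$. In Phase~1 I have the sender transmit $k$ copies of the input $0$ (which satisfies any power constraint); via feedback both parties observe the noise vector $(Z_1,\ldots,Z_k)$, and each can deterministically identify the first index $t^\ast$ with $Z_{t^\ast}\notin\mc N$, if any such index exists. Conditioned on $Z_{t^\ast}\notin\mc N$, the variable $Z_{t^\ast}$ has continuous cdf $F_{\mathrm{cont}}$, so $F_{\mathrm{cont}}(Z_{t^\ast})$ is uniform on $[0,1]$ and $K:=\lfloor L\cdot F_{\mathrm{cont}}(Z_{t^\ast})\rfloor$ is uniform on $\{0,\ldots,L-1\}$. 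The crucial feature is that the parameter $L$ is at our disposal and can be chosen arbitrarily large without spending additional channel uses; choosing $k$ so that $(1-q)^k\leq\lambda/3$ ensures common-randomness generation fails with probability at most $\lambda/3$.

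For Phase~2 I fix $M$ with $1/M\leq\lambda/3$. The assumption that $W_P$ has positive message transmission feedback capacity under the given power constraint yields a blocklength $n_{\mathrm{tr}}$ and an $(n_{\mathrm{tr}},M,\lambda/3)$ message transmission feedback code; by padding with the input $0$, an $(n',M,\lambda/3)$ code exists for every $n'\geq n_{\mathrm{tr}}$, so setting $n_s:=k+n_{\mathrm{tr}}$ makes Phase~2 implementable over channel uses $k+1,\ldots,n$ for every $n\geq n_s$. Each identity $i$ is assigned a deterministic tag map $T_i\colon\{0,\ldots,L-1\}\to\{1,\ldots,M\}$; the sender transmits $T_i(K)$ using the transmission code, and the receiver with identity $j$ accepts iff $K$ was generated and the decoded tag equals $T_j(K)$. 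Good tag maps are produced by the probabilistic method: picking $T_1,\ldots,T_N$ independently with uniform coordinates, Hoeffding's inequality applied to $X_{ij}:=|\{\ell:T_i(\ell)=T_j(\ell)\}|/L$ (an average of $L$ i.i.d.\ Bernoulli$(1/M)$ variables) yields $\Pr[X_{ij}\geq 1/M+\lambda/3]\leq\exp(-2L(\lambda/3)^2)$, and a union bound over the $\binom{N}{2}$ pairs exhibits deterministic tag maps with $X_{ij}\leq 2\lambda/3$ for all $i\neq j$ once $L$ is sufficiently large in terms of $N$ and $\lambda$. Since $L$ is free, any $N$ is accommodated without enlarging $n_s$.

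The error bookkeeping that remains is standard. The error of the first kind for identity $i$ is at most the generation failure probability plus the transmission error, i.e.\ $(1-q)^k+\lambda/3\leq 2\lambda/3$; the error of the second kind for a pair $(i,j)$ is at most $X_{ij}+\lambda/3\leq\lambda$. The construction does not distinguish between average-power and peak-power constraints because the Phase~1 inputs are all zero and the Phase~2 inputs are supplied by a code for the appropriate constraint. The step I expect to be the most delicate is the common-randomness generation---in particular, verifying measurability of the atom set $\mc N$ (it is countable, hence Borel), confirming that $F_{\mathrm{cont}}(Z_{t^\ast})$ is indeed uniform on $[0,1]$ even when $P$ has a singular continuous part (which follows from continuity of the conditional cdf), and checking that the resulting decision sets $\setd_j\subset\mbb R^n$ are measurable. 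The combinatorial and probabilistic steps are then routine.
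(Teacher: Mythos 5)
Your proposal is correct and mirrors the paper's own proof: a zero-input phase whose feedback-observed noise, conditioned off the (countable) atom set, yields uniform common randomness on an arbitrarily large set $\{1,\ldots,L\}$, followed by transmission of a random-coloring tag $T_i(K)$ via a positive-rate message transmission feedback code, with Hoeffding's inequality plus a union bound producing tag maps of small pairwise agreement and the same first-kind (generation failure plus decoding error) and second-kind (agreement fraction plus decoding error) bookkeeping. The differences are cosmetic only: you use a fixed-length randomness phase with zero-padding of the transmission code instead of the paper's two equal blocks of length $n$, and you quantize via $\lfloor L\,F_{\mathrm{cont}}(Z_{t^\ast})\rfloor$ rather than via quantile thresholds $z^*_l$ with $F'(z^*_l)=l/L$.
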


Note that the blocklength $n_s$ in the statement of the theorem only depends on the error probabilities $\lambda_1,\lambda_2$, but not on the number of identities $N$. Moreover, and this is the most surprising aspect of the theorem, if $W_P$ has positive message transmission feedback capacity, then as soon as the blocklength is large enough subject only to the required error bounds, the number of identities can be chosen arbitrarily large. 

Clearly, this result means that the "identification feedback capacity" of a channel $W_P$ with positive message transmission feedback capacity is infinite. This holds both for the case when "capacity" measures the logarithm of the largest number of possible identities per channel use and when it measures the double logarithm of the largest number of identities per channel use. In order to put Theorem \ref{theorem:new_main_theorem} into the context of other capacity results, we define the term "capacity" with respect to an arbitrary rate function.

 \begin{definition} \label{Def:phi-rate} 
 Let $\varphi$ be an arbitrary continuous strictly monotonically increasing \emph{rate function} $\varphi \colon \mathbb{R}^+ \to \mathbb{R}^+$ with $\lim_{x \to \infty} \varphi(x)= + \infty$ and let $\Gamma\geq 0$.
 \begin{enumerate}
 \item The identification rate $R$ for the channel $W_{P}$ is called \textit{achievable with respect to (w.r.t.) the rate function $\varphi$ and with (average/peak) power constraint $\Gamma$} if for every $\lambda \in (0,\frac{1}{2})$ there exists an $n_s(\lambda)$ such that for all $n\geq n_s(\lambda)$ there exists an $(n,\varphi(n R),\lambda,\lambda)$ deterministic identification feedback code for $W_{P}$ with (average/peak) power constraint $\Gamma$.
  \item The \textit{deterministic feedback identification capacity $\overline C^\varphi_{IDf}(W_P,\Gamma)$ of $W_{P}$ w.r.t. the rate function $\varphi$ and with average power constraint $\Gamma$} is the supremum of all achievable rates w.r.t.\ $\varphi$ and with average power constraint $\Gamma$.
  \item The \textit{deterministic feedback identification capacity $C^\varphi_{IDf}(W_P,\Gamma)$ of $W_{P}$ w.r.t. the rate function $\varphi$ and with peak power constraint $\Gamma$} is the supremum of all achievable rates w.r.t.\ $\varphi$ and with peak power constraint $\Gamma$.
  \end{enumerate}
 \end{definition}
 
The rate function used in the traditional definition of message transmission capacity is $\varphi_1(x)=2^x$. The double-exponential increase in blocklength of the number of identities which is achievable in identification for discrete memoryless channels with randomized encoding is reflected in the use of the rate function $\varphi_3(x)=2^{2^x}$ in the corresponding capacity definition. More details will be discussed below, where we also give the example of a rate function $\varphi_2$ which ranges between $\varphi_1$ and $\varphi_3$.
 
Theorem \ref{theorem:new_main_theorem} now has the following corollary.
 
\begin{corollary}
Let $\varphi$ be an arbitrary rate function as defined in Definition \ref{Def:phi-rate} and let $\Gamma\geq 0$. If the channel $W_P$ with non-discrete additive white noise has positive message transmission feedback capacity with average power constraint $\Gamma$, then $C^{\varphi}_{IDf}(W_P,\Gamma)=+\infty$. If $W_P$ has positive message transmission feedback capacity with peak power constraint $\Gamma$, then $C_{IDf}(W_P,\Gamma)=+\infty$.
\label{corrolary}
\end{corollary}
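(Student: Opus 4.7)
The plan is to deduce the corollary as a near-immediate consequence of Theorem \ref{theorem:new_main_theorem} by exploiting its most distinctive feature: the blocklength $n_s$ depends only on the tolerated error level $\lambda$ and not on the number of identities $N$. This decoupling is precisely what lets $N$ be chosen as an arbitrarily fast-growing function of $n$, regardless of the rate function $\varphi$.

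Concretely, fix any rate $R>0$ and any $\lambda\in(0,\tfrac12)$. By Theorem \ref{theorem:new_main_theorem}, since $W_P$ has positive message transmission feedback capacity under the prescribed (average or peak) power constraint $\Gamma$, there exists $n_s=n_s(\lambda)$ with the property that, for every positive integer $N$ and every $n\geq n_s$, a deterministic $(n,N,\lambda,\lambda)$ identification feedback code for $W_P$ exists under the same power constraint. For every $n\geq n_s$, set $N_n=\lceil\varphi(nR)\rceil$, which is a positive integer by the assumed properties of $\varphi$ (continuous, strictly increasing, $\mathbb{R}^+\to\mathbb{R}^+$). Applying Theorem \ref{theorem:new_main_theorem} with this choice yields an $(n,N_n,\lambda,\lambda)$ code, and by discarding any excess identities one obtains an $(n,\varphi(nR),\lambda,\lambda)$ code in the sense of Definition \ref{Def:phi-rate} (the error of the first kind only improves on a subcode, and the maximum error of the second kind is taken over a smaller index set of pairs). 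Hence $R$ is an achievable identification rate w.r.t.\ $\varphi$ under the given power constraint. Since $R>0$ was arbitrary, the supremum in Definition \ref{Def:phi-rate} is $+\infty$, which establishes both $\overline{C}^{\varphi}_{IDf}(W_P,\Gamma)=+\infty$ in the average-power case and $C^{\varphi}_{IDf}(W_P,\Gamma)=+\infty$ in the peak-power case.

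There is essentially no obstacle at this stage: the entire analytical difficulty of the corollary has already been absorbed into Theorem \ref{theorem:new_main_theorem}, whose proof rests on using the feedback link to generate arbitrarily large common randomness and then combining it with a single positive-rate message transmission feedback code. The only minor bookkeeping concerns the passage from the integer code parameter $N$ to the possibly non-integer quantity $\varphi(nR)$, handled by the standard monotonicity observation that an $(n,N,\lambda,\lambda)$ identification code trivially induces an $(n,N',\lambda,\lambda)$ code for every positive integer $N'\leq N$.
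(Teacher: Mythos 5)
Your proposal is correct and follows essentially the same route as the paper's own proof: invoke Theorem \ref{theorem:new_main_theorem} with $N$ at least $\varphi(nR)$ for every $n\geq n_s(\lambda)$, conclude every rate $R$ is achievable w.r.t.\ $\varphi$, and hence the capacity is infinite. Your additional remarks on integer rounding and passing to a subcode are harmless bookkeeping that the paper leaves implicit.
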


\subsection{Comparison and Discussion}

In this discussion section, we will also use the rate functions of Definition \ref{Def:phi-rate} to describe the growth rates of identification codes for other types of identification, for instance identification without feedback or with randomized encoding. For any given type of identification, one can define rates achievable with respect to a given rate function $\varphi$ exactly analogously as we did for identification with feedback in Definition \ref{Def:phi-rate}.

\subsubsection{Gaussian channels}

Let us consider the case where $W_P$ is a Gaussian channel, i.e., $P$ is a normal distribution with mean 0 and variance $\sigma^2>0$. As noted in Remark \ref{rem:abscont}, this is an absolutely continuous probability distribution and the message transmission capacities of $W_P$ both for average and peak power constraint are positive for all positive $\Gamma$. 

In the case of deterministic encoding without feedback, it turns out that the identification capacity of the Gaussian channel w.r.t. $\varphi_1(x)=2^x$ is infinite and zero w.r.t. $\varphi_3=2^{2^x}$. It was shown in \cite{deterministicFading} that the identification capacity w.r.t.\ the rather unusual intermediate rate function $\varphi_2(x)=x^x=2^{x\log x}$ is positive and finite. Hence in this problem setup, the maximal number of identities grows superexponentially in the blocklength, but slower than doubly-exponentially. By Theorem \ref{theorem:new_main_theorem}, this behavior changes radically when we add perfect feedback. Thus feedback here has a dramatically different effect than in the message transmission case, where the addition of feedback leaves the capacity unchanged.

Generally, one can observe that identification is more sensitive with respect to different problem setups than message transmission. It has been proved that for identification without feedback, but with randomized encoding over the Gaussian channel \cite{Burnashev,MasterThesis}, the size of identification codes scales doubly exponentially fast in the blocklength. That means that the corresponding capacity w.r.t.\ the rate function $\varphi_3(x)=2^{2^x}$ is positive and finite.

\subsubsection{Discrete memoryless channels}

DMCs are somewhat less surprising when it comes to the effect of feedback on the identification capacity. Still, feedback does have an effect, in contrast to Shannon's result \cite{Shannon56} that the message transmission capacity of a DMC does not change whether or not feedback is available.

In the case of DMCs, the identification feedback capacity with deterministic encoders is positive but finite w.r.t.\ the doubly exponential rate function $\varphi_3$ if the DMC has positive capacity, but is not noiseless. In fact, if it is positive, this capacity is given by the maximal output entropy which can be produced by any input to the DMC \cite{IDFeedback}. This result indicates the importance of channel noise for the generation of common randomness shared by sender and receiver, which is necessary to achieve such a large rate in the absence of randomized encoding. (Without feedback and randomized encoding, only the capacity w.r.t.\ the singly exponential rate function $\varphi_1$ is positive and finite \cite{IDwithoutRandom, deterministicDMC}.) 

In fact, our proof strategy for Theorem \ref{theorem:new_main_theorem} follows \cite{IDFeedback}. The reason why we obtain an infinite identification feedback capacity is that the sender and the receiver can generate an infinite amount of common randomness. That the identification feedback capacity is not infinite in the discrete case is due to the fact that it is impossible to generate uniform random experiments on arbitrarily large finite sets from a finite number of channel outputs of a DMC. If, in addition, randomized encoding is allowed, a further increase of the identification capacity can be achieved for DMCs, but without changing the rate function \cite{IDFeedback}.

 
\section{Proof of the Main Result}\label{sec: ID_GaussianChannel_Feedback}

In this section, we provide a proof of Theorem \ref{theorem:new_main_theorem} and Corollary \ref{corrolary}. We start by giving a short overview of our proof strategy. We associate every identity $i$ with a "coloring function" $k_i:\{1,\ldots,L\}\to\{1,\ldots,M\}$. Assume that the sender and the receiver have access to the outcome of a random experiment which uniformly at random chooses an $l\in\{1,\ldots,L\}$. If the sender wants to send identity $i$, it computes $k_i(l)$ and sends this "color" to the receiver using a message transmission feedback code. The receiver interested in identity $j$ decodes the channel output to a value $\hat m\in\{1,\ldots,M\}$. It knows the value $l$, and so it can test whether $k_j(l)=\hat m$. If this is the case, then it determines that identity $j$ was sent. Otherwise, it determines that $j$ was not sent. The number of identities which can be distinguished in this way with small error probabilities grows to infinity with $L$ if we can find message transmission feedback codes of positive rate with arbitrarily small error probability.

In order to make this strategy work, we need to find a way to implement the joint random experiment. This is where feedback helps us: We can just use the channel noise as a natural source of randomness. By feedback, the channel noise is also known to the sender. In the first subsection below, we will see how to generate from non-discrete channel noise a uniform distribution on $\{1,\ldots,L\}$ for arbitrarily large $L$. In the second subsection, we will formally define our identification feedback codes and analyze the error probabilities as required for the proof of Theorem \ref{theorem:new_main_theorem}. The third subsection contains the proof of Corollary \ref{corrolary}.

\subsection{Common Randomness Generation}\label{sec:CR}

By sending the symbol 0, the sender generates a random variable $Z$ observed by the receiver and distributed according to the noise distribution $P$. Through feedback, the sender, just like the receiver, knows the precise noise realization. If $P$ is a continuous distribution, it is possible to partition $\mbb R$ into $L$ subsets of equal probability. In this way, the sender and the receiver can generate a uniform random experiment on $\{1,\ldots,L\}$. 

If $P$ is non-discrete, but not continuous, this is still possible, but more work has to be done. The i.i.d.\ noise has to be realized several times, and if at least one of the noise samples is in the "continuous part" of the noise distribution, a uniform distribution on $\{1,\ldots,L\}$ can be generated as in the continuous case. However, there remains a positive probability that this process fails, which has to be taken into account later when error probabilities are analyzed.

Recall that Lebesgue decomposition permits us to represent the cdf $F$ of $P$ as
\begin{equation*}
    F(z)=p_DD+p_AA+p_SS,
\end{equation*}
where $D$ is a discrete cdf, $A$ an absolutely continuous cdf and $S$ a singular continuous cdf and the nonnegative numbers $p_D,p_A,p_S$ add up to 1. Define $\bar{\sete}$ as the set of jump points of $D$, i.e.,
\begin{equation*}
    \bar{\sete}=\{ z: P[\{z\}]>0 \},
\end{equation*}
and set $\sete=\mathbb{R} \setminus \bar{\sete}$. Since $F$ is non-discrete by assumption, the "continuous part" $p_AA+p_SS$ of $F$ does not vanish, hence $P[\mc E]=p_A+p_S>0$. Therefore the function
\[
    F'=\frac{p_AA+p_SS}{P[\mc E]}
\]
is the cdf of a continuous probability distribution $P'$. In fact, we have the following relation between $P$ and $P'$.

\begin{lemma}
    $P'=P[\cdot\vert\mc E]$.
\end{lemma}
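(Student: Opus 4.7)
The goal is to verify that for every Borel set $\mathcal{A}\subset\mbb R$,
\[
    P'(\mathcal{A}) \;=\; \frac{P[\mathcal{A}\cap\mc E]}{P[\mc E]},
\]
since this is exactly the definition of the conditional probability $P[\mc A\vert\mc E]$ (recall that $P[\mc E]=p_A+p_S>0$ by the non-discreteness assumption, so the conditioning is well-defined). Identifying the cdfs $D$, $A$, $S$ with their associated Borel measures, it suffices by the Lebesgue decomposition to show that
\[
    p_A A(\mc A) + p_S S(\mc A) \;=\; p_D D(\mc A\cap\mc E) + p_A A(\mc A\cap\mc E) + p_S S(\mc A\cap\mc E),
\]
and then divide both sides by $P[\mc E]$.

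The key observation is that each of the three parts interacts trivially with $\mc E$ and its complement $\bar{\mc E}$. First, $D$ is a discrete cdf, so by definition its associated measure is supported on a countable set of atoms, all of which satisfy $D[\{z\}]>0$ and in particular have $P[\{z\}]\geq p_D D[\{z\}]>0$ (assuming $p_D>0$; if $p_D=0$ the term disappears trivially). Hence the support of $D$ is contained in $\bar{\mc E}$, giving $D(\mc A\cap\mc E)=0$. Second, the set $\bar{\mc E}$ of atoms of $P$ is itself countable: the atoms of mass $>1/n$ number at most $n$, so the atoms form a countable union of finite sets. Since $A$ is absolutely continuous and $S$ is singular continuous, both are in particular continuous measures and therefore assign measure zero to any countable set, in particular $A(\bar{\mc E})=S(\bar{\mc E})=0$. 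This yields $A(\mc A\cap\mc E)=A(\mc A)$ and $S(\mc A\cap\mc E)=S(\mc A)$.

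Plugging these three identities into the displayed equation above verifies it immediately, and dividing by $P[\mc E]=p_A+p_S$ gives
\[
    P'(\mc A) \;=\; \frac{p_A A(\mc A)+p_S S(\mc A)}{p_A+p_S} \;=\; \frac{P[\mc A\cap\mc E]}{P[\mc E]} \;=\; P[\mc A\vert\mc E],
\]
as desired. The only subtle step is the countability of $\bar{\mc E}$, which is what ensures that the continuous components $A$ and $S$ ignore the atomic set; the rest is bookkeeping with the Lebesgue decomposition.
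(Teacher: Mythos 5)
Your proof is correct and follows essentially the same route as the paper's: decompose $P$ via Lebesgue decomposition, observe that the discrete part is supported in $\bar{\mc E}$ while the absolutely continuous and singular continuous parts vanish on the countable set $\bar{\mc E}$, and divide by $P[\mc E]$. The only cosmetic difference is that you verify the identity directly on all Borel sets, whereas the paper checks equality of the cdfs (i.e., on sets $(-\infty,z]$) and invokes the fact that a cdf determines the measure; your explicit handling of the $p_D=0$ case and of the countability of $\bar{\mc E}$ is a welcome bit of extra care but not a different argument.
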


\begin{proof}
    It is sufficient to check that $P[\cdot\vert\mc E]$ and $P'$ have the same cdf. As long as we do not know that they are equal, denote the cdf of $P[\cdot\vert\mc E]$ by $\tilde F$. Also, denote the probability measures corresponding to $D, A$ and $S$ by $P_D,P_A$ and $P_S$, respectively. Choose any $z\in\mbb R$. Then 
    \begin{align*}
        \tilde F(z)
        &=P[(-\infty,z]\vert\mc E]\\
        &=\frac{P[(-\infty,z]\cap\mc E]}{P[\mc E]}\\
        &=\frac{p_DP_D[(-\infty,z]\cap\mc E]+p_AP_A[(-\infty,z]\cap\mc E]+p_SP_S[(-\infty,z]\cap\mc E]}{P[\mc E]}\\
        &\overset{(a)}{=}\frac{p_AP_A[(-\infty,z]]+p_SP_S[(-\infty,z]]}{P[\mc E]}\\
        &=F'(z),
    \end{align*}
    where $(a)$ is due to the specific properties of $P_D,P_A$ and $P_S$: We have $P_D[\mc E]=0$ because $P_D[\overline{\mc E}]=1$. Moreover, $P_A[\overline{\mc E}]=P_S[\overline{\mc E}]=0$ because $P_A$ and $P_S$ are continuous and $\overline{\mc E}$ is at most countably infinite \cite[Theorem 3 on p.~316]{K_Formin}, so $P_A[(-\infty,z]\cap\mc E]=P_A[(-\infty,z]]$ and $P_S[(-\infty,z]\cap\mc E]=P_S[(-\infty,z]]$.
\end{proof}

Let $L$ be any positive integer. By the continuity of $F'$ and the intermediate value theorem, there exist not necessarily unique numbers $z^*_1,\ldots,z^*_{L-1}$ satisfying
\[
    F'(z^*_l)=\frac{l}{L}.
\]
Hence if $F$ is continuous, then $F=F'$ and each of the intervals $(-\infty,z^*_1]$, $(z^*_1,z^*_2]$, \ldots, $(z^*_{L-2},z^*_{L-1}]$, $(z^*_{L-1},\infty)$ has probability $1/L$. For the general case, we have the following result.

\begin{lemma}\label{lem:CR_construction}
    Let $0<\eta<1$ and $L$ a positive integer. Let $Z_1,\ldots,Z_n$ be $n$ i.i.d.\ copies of the noise random variable $Z$ with non-discrete probability distribution $P$. Then for sufficiently large $n$, there exists a mapping $\pi:\mbb R^n\to\{1,\ldots,L\}\cup\{\infty\}$ such that for any $1\leq l\leq L$,
    \begin{equation}\label{eq:cond_unif}
        \Pr[\pi(Z_1,\ldots,Z_n)=l\vert \pi(Z_1,\ldots,Z_n)\neq\infty]=\frac{1}{L}
    \end{equation}
    and
    \begin{equation}\label{eq:CR_err_small}
        \Pr[\pi(Z_1,\ldots, Z_n)=\infty]\leq\eta.
    \end{equation}
\end{lemma}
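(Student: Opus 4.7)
The plan is to use the channel noise itself as a source of randomness. The key observation is that, since $P$ is non-discrete, $P[\mc E] = p_A + p_S > 0$, so $p_D = 1 - P[\mc E] < 1$. By taking enough i.i.d.\ noise samples, with overwhelming probability at least one of them lands in the "continuous part" $\mc E$, and the first such sample can then be used to produce a uniform value on $\{1,\ldots,L\}$ via the continuous cdf $F'$ constructed above.

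First I would choose $n$ large enough that $p_D^n \leq \eta$; this is possible since $p_D<1$. Next I would define the random index
\[
    T(z_1,\ldots,z_n) = \min\{i : z_i \in \mc E\},
\]
setting $T = \infty$ if no such $i$ exists. By i.i.d.\ and the definition of $\bar{\mc E}$, $\Pr[T=\infty] = p_D^n \leq \eta$, which will supply \eqref{eq:CR_err_small}. I would then let $z_1^* < \cdots < z_{L-1}^*$ be the partition points satisfying $F'(z_l^*) = l/L$ (which exist by continuity of $F'$ and the intermediate value theorem, as noted just before the lemma), and define
\[
    \pi(z_1,\ldots,z_n) =
    \begin{cases}
        l & \text{if } T(z_1,\ldots,z_n) < \infty \text{ and } z_T \in (z_{l-1}^*, z_l^*], \\
        \infty & \text{if } T(z_1,\ldots,z_n) = \infty,
    \end{cases}
\]
with the convention $z_0^* = -\infty$ and $z_L^* = +\infty$.

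The remaining task is to verify \eqref{eq:cond_unif}. I would decompose $\{\pi = l\}$ by the value of $T$:
\[
    \Pr[\pi(Z^n) = l] = \sum_{i=1}^{n} \Pr[T = i,\ Z_i \in (z_{l-1}^*, z_l^*]].
\]
For each fixed $i$, independence and the factorization $\{T=i\} = \{Z_1,\ldots,Z_{i-1} \in \bar{\mc E},\ Z_i \in \mc E\}$ give
\[
    \Pr[T = i,\ Z_i \in (z_{l-1}^*, z_l^*]] = p_D^{i-1}\, P\bigl[(z_{l-1}^*, z_l^*] \cap \mc E\bigr].
\]
By the preceding lemma, $P[\,\cdot \mid \mc E] = P'$, so $P[(z_{l-1}^*, z_l^*] \cap \mc E] = P[\mc E]\cdot(F'(z_l^*) - F'(z_{l-1}^*)) = P[\mc E]/L$, independently of $l$. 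Summing over $i$ yields $\Pr[\pi=l] = (1-p_D^n)/L$, and dividing by $\Pr[\pi \neq \infty] = 1-p_D^n$ gives \eqref{eq:cond_unif}.

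The only subtle point is the step where the conditional distribution of $Z_T$ given $T < \infty$ is identified with $P'$; the cleanest way is the direct decomposition above rather than appealing to a general result about first-hitting times, since it avoids conditioning pitfalls. The rest of the argument is a routine consequence of the Lebesgue decomposition and the continuity of $F'$ established just before the lemma.
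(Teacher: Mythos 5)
Your proposal is correct and follows essentially the same route as the paper: the same choice of $n$ with $P[\overline{\mc E}]^n=p_D^n\leq\eta$, the same map $\pi$ built from the first sample landing in $\mc E$ and the quantiles $z^*_l$ of $F'$, and the same decomposition over the index of that first sample together with the identity $P'=P[\cdot\vert\mc E]$. The only cosmetic difference is that you compute the unconditional probability $\Pr[\pi=l]=(1-p_D^n)/L$ and then divide by $\Pr[\pi\neq\infty]$, while the paper manipulates the conditional probability directly; both yield \eqref{eq:cond_unif}.
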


\begin{proof}
    Choose $n$ so large that $P[\overline{\mc E}]^n\leq\eta$ (recall that $P[\overline{\mc E}]<1$ since $P$ is non-discrete). If $(z_1,\ldots,z_n)\in\overline{\mc E}^n$, then set $\pi(z_1,\ldots,z_n)=\infty$. Otherwise, let $j$ be the smallest index for which $z_j\in\mc E$ and set
    \[
        \pi(z_1,\ldots,z_n)=l\quad\text{if }z^*_{l-1}<z_j\leq z^*_l,
    \]
    where $z^*_0=-\infty$ and $z^*_L=+\infty$.
    
    It remains to show \eqref{eq:cond_unif} and \eqref{eq:CR_err_small}. For the latter relation, note that by the independence of $Z_1,\ldots,Z_n$,
    \begin{align*}
        &\Pr[\pi(Z_1,\ldots,Z_n)=\infty]
        =\Pr[Z_1\in\overline{\mc E},\ldots,Z_n\in\overline{\mc E}]=\Pr[Z\in\overline{\mc E}]^n
        =P[\overline{\mc E}]^n
        \leq\eta
    \end{align*}
    by choice of $n$. 
    
    In order to show \eqref{eq:cond_unif} we again use the independence of $Z_1,\ldots,Z_n$ and obtain
    \begin{align*}
        &\Pr[\pi(Z_1,\ldots,Z_n)=l\vert\pi(Z_1,\ldots,Z_n)\neq\infty]\\
        &=\sum_{j=1}^n\Pr\Bigl[Z_1\in\overline{\mc E},\ldots,Z_{j-1}\in\overline{\mc E},Z_j\in\mc E,z^*_{l-1}<Z_j\leq z^*_l\left\vert (Z_1,\ldots,Z_n)\notin\overline{\mc E}^n\right.\Bigr]\\
        &=\frac{1}{1-P[\overline{\mc E}]^n}\sum_{j=1}^n\Pr[Z_1\in\overline{\mc E},\ldots,Z_{j-1}\in\overline{\mc E},Z_j\in\mc E,z^*_{l-1}<Z_j\leq z^*_l]\\
        &=\frac{1}{1-P[\overline{\mc E}]^n}\sum_{j=1}^n\Pr[Z\in\overline{\mc E}]^{j-1}\Pr[z^*_{l-1}<Z\leq z^*_l\vert Z\in\mc E]\Pr[Z\in\mc E]\\
        &=\frac{1}{L(1-P[\overline{\mc E}]^n)}\sum_{j=1}^nP[\overline{\mc E}]^{j-1}P[\mc E]\\
        &=\frac{1}{L}.
    \end{align*}
    This completes the proof.
\end{proof}

We will use the function $\pi$ constructed in the lemma to establish a uniform random experiment on the set $\{1,\ldots,L\}$ shared by sender and receiver. This does not succeed with certainty, but note that the failure probability, which is at most $\eta$, is independent of $L$.

\subsection{Proof of Theorem \ref{theorem:new_main_theorem}} \label{subsection:codingscheme}

In this section, we provide a proof for Theorem \ref{theorem:new_main_theorem}. We first give the proof for the case of an average power constraint. At the end of the subsection, we will show how to modify the code in the case of a peak power constraint; the required changes are minimal and straightforward.

The feedback code we construct consists of two parts. In the first one, we use the channel $n$ times to establish uniform common randomness as described in the previous subsection. The second part, of the same blocklength $n$, is used for the transmission of the color value according to which a receiver determines whether its identity was sent, as described already in the sketch of our strategy at the beginning of this section. The total blocklength of our coding scheme will thus be $2n$. Some optimization would certainly be possible, but we concentrate here on showing the possibility of distinguishing an arbitrary number of identities.

\subsubsection{Code construction}

Let $0<\lambda<1/2$ and $\Gamma>0$. For sufficiently large $n$ and any positive integer $N$, we are going to construct a $(2n,N,\lambda,\lambda)$ identification feedback code with average power constraint $\Gamma$ consisting of two parts of length $n$ each. As a first condition on $n$, we impose that
\begin{equation}\label{eq:n_cond}
    P[\overline{\mc E}]^n\leq\frac{\lambda}{2}.
\end{equation}
There are two main ingredients for our construction. First of all, we need an $(n,M,\lambda/2)$ message transmission feedback code $\{(u_m,\mc D_m'):1\leq m\leq M\}$ satisfying the average power constraint $\Gamma$ and
\begin{equation}\label{eq:M_and_lambda}
    M>\frac{2}{\lambda}.
\end{equation}
Such a code exists for sufficiently large $n$ by the assumption that $W_P$ has a positive message transmission feedback capacity with average power constraint $\Gamma$.

The second ingredient to our construction is a family of coloring functions $\{k_i:1\leq i\leq N\}$, one for each identity, such that $k_i:\{1,\ldots,L\}\to\{1,\ldots,M\}$ for all $i$. The properties we need $L$ and the $k_i$ to satisfy will be given below.

\begin{figure*}
\centering
\scalebox{1}{\begin{tikzpicture}
		\node[rectangle split, rectangle split parts=6,draw,inner sep=1ex] (A) at (0,1)
		{$1$\nodepart{two}$2$ \nodepart{three}$\vdots$\nodepart{four}$l$\nodepart{five}$\vdots$\nodepart{six}$L$ };
		
		\node[rectangle split, rectangle split parts=6,draw,inner sep=.5ex] (B) at (2,1)
		{$1$\nodepart{two}$2$ \nodepart{three}$ \vdots$\nodepart{four}$j$\nodepart{five}$ \vdots$\nodepart{six}$M$ };
		\draw[->] (A.text east) -- (B.two west);
		\draw[->] (A.two east) -- (B.two west);
		\draw[->] (A.three east) -- (B.four west);
		\draw[->] (A.five east) -- (B.four west);
		\draw[->] (A.four east) -- (B.six west);
		\draw[->] (A.six east) -- (B.six west);
		\node (C) at (1,-1) {$k_1$};
		
		\node[rectangle split, rectangle split parts=6,draw,inner sep=1ex] (A1) at (8,1)
		{$1$\nodepart{two}$2$ \nodepart{three}$\vdots$\nodepart{four}$l$\nodepart{five}$\vdots$\nodepart{six}$L$ };
		
		\node[rectangle split, rectangle split parts=6,draw,inner sep=.5ex] (B1) at (10,1)
		{$1$\nodepart{two}$2$ \nodepart{three}$ \vdots$\nodepart{four}$j$\nodepart{five}$ \vdots$\nodepart{six}$M$ };
		
		\draw[->] (A1.text east) -- (B1.two west);
		\draw[->] (A1.four east) -- (B1.two west);
		\draw[->] (A1.three east) -- (B1.four west);
		\draw[->] (A1.two east) -- (B1.four west);
		\draw[->] (A1.five east) -- (B1.six west);
		\draw[->] (A1.six east) -- (B1.six west);
		\node (C1) at (9,-1) {$k_N$};
		
		\node (D) at (5,1) {$\ldots$};
		\node (E) at (5.5,1) {$\ldots$};
		
		\end{tikzpicture}}
    \caption{Typical coloring functions}
    \label{fig:typical_coloring_functions}
\end{figure*}
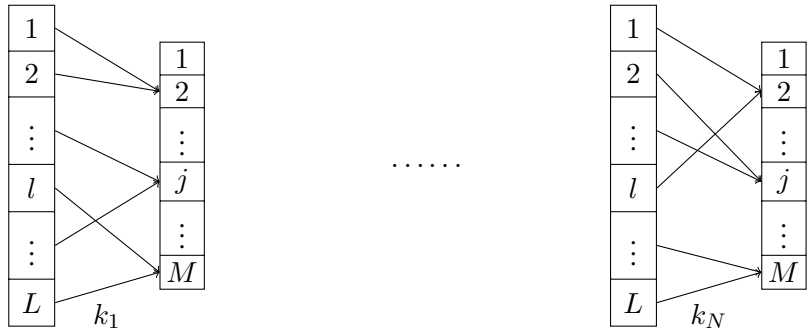

We first show how to obtain our identification feedback code from these ingredients. They are connected by the function $\pi:\mbb R^n\to\{1,\ldots,L\}$ constructed in Lemma \ref{lem:CR_construction}. For identity $i\in\{1,\ldots,N\}$, we define the encoding feedback function $f_i$ as follows. We set $f_i^1=\ldots=f_i^n=0$, so no feedback is used in the first $n$ steps. For $n+1\leq t\leq 2n$, we set
\begin{align*}
    f_i^t(y_1,\ldots,y_{t-1})
    =
    \begin{cases}
        0 & \text{if }\pi(y_1,\ldots,y_n)=\infty,\\
        u_{k_i(l)}^t(y_{n+1},\ldots,y_{t-1}) & \text{if }\pi(y_1,\ldots,y_n)=l\in\{1,\ldots,L\}.
    \end{cases}
\end{align*}
In other words, if the generation of common randomness between the sender and the receiver succeeded, the components $n+1,\ldots,2n$ of $f_i$ are formed by the encoding feedback function $u_{k_i(l)}$ of the message transmission code which transmits the message $k_i(l)$ resulting from the random experiment performed in the first $n$ channel uses. Since we can use the symbol 0 in the first $n$ channel uses, every $f_i$ satisfies the average power constraint $\Gamma$ because every $u_m$ does.

We define the decoding sets $\mc D_i$ indirectly by defining a decision function $\psi_i:\mbb R^{2n}\to\{0,1\}$ for every identity $i$. We then set $\mc D_i=\psi^{-1}(1)$, so the receiver with identity $i$ decides that identity $i$ was sent if $\psi_i(y_1,\ldots,y_{2n})=1$. The condition for $\psi_i(y_1,\ldots,y_{2n})=1$ is that
\begin{align*}
    &\pi(y_1,\ldots,y_n)=l\qquad\text{ for some }l\in\{1,\ldots,L\}\text{ and }
    (y_{n+1},\ldots,y_{2n})\in\mc D'_{k_i(l)}.
\end{align*}
In all other cases, we set $\psi_i(y_1,\ldots,y_{2n})=0$. As promised, if the common randomness generation between sender and receiver succeeds, the receiver checks whether the message $\hat m$ transmitted in steps $n+1,\ldots,2n$ matches the color $k_i(l)$, and decides that $i$ was sent if and only if this is the case.

\subsubsection{The coloring functions}

For the coloring functions, we require that they have a large pairwise Hamming distance. Formally, for any two distinct identities $i$ and $j$, we want
\begin{equation}\label{eq:large_Hamming}
    \lvert\{l:k_i(l)=k_j(l)\}\rvert\leq\frac{\lambda L}{2}.
\end{equation}
This condition ensures that the overlaps of the decoding sets $\mc D_i$ are not too large, which is important for upper-bounding the error probabilities of the second kind. We are going to choose the $k_i$ at random in order to show that it is possible to find $N$ coloring functions satisfying \eqref{eq:large_Hamming} if $L$ is sufficiently large. We follow the same method as \cite{IDFeedback}. Typical coloring functions look as shown in Fig.~\ref{fig:typical_coloring_functions}.

Let $\{K_i(l):1\leq i\leq N,1\leq l\leq L\}$ be i.i.d.\ random variables uniformly distributed on $\{1,\ldots,M\}$. The values $K_i(1),\ldots,K_i(L)$ are the values of the random coloring function $K_i$. We bound the probability that the Hamming distance of the random coloring functions of any two distinct identities $i$ and $j$ is too large. For this purpose, define the random variables $H_1^{(i,j)},\ldots,H_L^{(i,j)}$ by 
\[
    H_l^{(i,j)}=
    \begin{cases}
        1 & \text{if }K_i(l)=K_j(l),\\
        0 & \text{else}.
    \end{cases}
\] 
We will need Hoeffding's inequality.

\begin{lemma}[Hoeffding's inequality, Theorem 1 in \cite{HoeffdingArticle}]  \label{HoeffIneq}
 Let $\{X_i\}$ be i.i.d.\ random variables taking values in $[0,1]$ with mean $\mu$. Then for all $c >0$ with $\mu+c \leq 1$, 
 \begin{equation*}
 \Pr\left[\frac{1}{n} \sum_{i=1}^{n} X_i- \mu \geq c \right]\leq 2^{-\frac{2nc^2}{\ln 2}}.
 \end{equation*}
\end{lemma}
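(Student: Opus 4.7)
The plan is to prove this via the classical Chernoff–Hoeffding argument: bound the one-sided deviation by applying Markov's inequality to the exponential moment, exploit independence to factor the moment generating function, control each factor using convexity, and finally optimize the free parameter.

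First I would center the variables by setting $Y_i = X_i - \mu$, so that $\mathbb{E}[Y_i] = 0$ and $Y_i$ takes values in $[-\mu, 1-\mu]$, which is an interval of length $1$. For any $s > 0$, Markov's inequality applied to $e^{s \sum_i Y_i}$ gives
\[
    \Pr\Bigl[\sum_{i=1}^n Y_i \geq nc\Bigr] \leq e^{-snc}\, \mathbb{E}\Bigl[e^{s \sum_{i=1}^n Y_i}\Bigr] = e^{-snc}\prod_{i=1}^n \mathbb{E}\bigl[e^{sY_i}\bigr],
\]
where the equality uses the independence of the $Y_i$.

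The main obstacle is controlling each factor $\mathbb{E}[e^{sY_i}]$, which is Hoeffding's lemma: if $Y$ is a zero-mean random variable taking values in $[a,b]$, then $\mathbb{E}[e^{sY}] \leq e^{s^2 (b-a)^2 / 8}$. The standard route uses the convexity of $y \mapsto e^{sy}$ to write $e^{sy} \leq \frac{b-y}{b-a}e^{sa} + \frac{y-a}{b-a}e^{sb}$ for $y \in [a,b]$, take expectations using $\mathbb{E}[Y] = 0$, and then analyze the logarithm $\phi(s)$ of the resulting upper bound. A direct computation shows $\phi(0) = \phi'(0) = 0$ and $\phi''(s) \leq (b-a)^2/4$, so Taylor's theorem yields $\phi(s) \leq s^2(b-a)^2/8$. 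Applied here with $(b-a) = 1$, this gives $\mathbb{E}[e^{sY_i}] \leq e^{s^2/8}$.

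Finally I would substitute and optimize. Combining the previous two steps yields
\[
    \Pr\Bigl[\frac{1}{n}\sum_{i=1}^n X_i - \mu \geq c\Bigr] \leq e^{-snc + n s^2/8}.
\]
The exponent is minimized at $s = 4c$, which (after checking $s > 0$) gives the bound $e^{-2nc^2}$. Using the identity $e^{-x} = 2^{-x/\ln 2}$ converts this to $2^{-2nc^2/\ln 2}$, exactly as stated. The hypothesis $\mu + c \leq 1$ is harmless here; it merely ensures that the event is not automatically empty and that the range bound used in Hoeffding's lemma is tight. Since this lemma is classical and used as a black box in the present paper, in practice one would simply cite \cite{HoeffdingArticle} rather than reproduce the above argument.
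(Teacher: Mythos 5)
Your proof is correct, but there is nothing in the paper to compare it against: the paper does not prove this lemma, it imports it verbatim as Theorem 1 of \cite{HoeffdingArticle} and uses it as a black box in the proof of Lemma \ref{lemmahof}. You have therefore supplied the classical Chernoff--Hoeffding argument that the citation stands in for, and every step checks out: centering gives $Y_i\in[-\mu,1-\mu]$, an interval of length $1$; Markov's inequality applied to $e^{s\sum_i Y_i}$ together with independence factors the moment generating function; Hoeffding's lemma with $b-a=1$ bounds each factor by $e^{s^2/8}$; and optimizing the exponent $-snc+ns^2/8$ at $s=4c$ yields $e^{-2nc^2}$, which equals $2^{-2nc^2/\ln 2}$ because $e^{-x}=2^{-x/\ln 2}$, matching the base-$2$ form in the statement. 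Your remark about the hypothesis $\mu+c\leq 1$ is also accurate: it is not needed for the bound you derive (the inequality holds vacuously when $\mu+c>1$), but it appears because Hoeffding's Theorem 1 primarily states the sharper relative-entropy bound, of which the $e^{-2nc^2}$ form quoted here is a corollary. Since the lemma is classical and only its statement is needed for Lemma \ref{lemmahof}, simply citing \cite{HoeffdingArticle}, as you note at the end, is exactly what the authors do.
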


\begin{lemma} \label{lemmahof}
For $\lambda \in (0,1/2)$ satisfying \eqref{eq:M_and_lambda} and any pair $(i,j)$ of distinct identities in $\{1,\ldots,N\}$,
\begin{align*}
    \Pr\left[\sum_{l=1}^L H_l^{(i,j)} > \frac{\lambda L}{2} \right]  \leq 2^{-\frac{L\lambda c^2}{\ln 2}},
\end{align*} 
where $c=\frac{\lambda}{2}-\frac{1}{M}>0$.
\end{lemma}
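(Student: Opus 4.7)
The plan is to recognize that for any two distinct identities $i \neq j$, the variables $H_1^{(i,j)}, \ldots, H_L^{(i,j)}$ are i.i.d.\ Bernoulli random variables, which then allows a direct application of Hoeffding's inequality (Lemma \ref{HoeffIneq}).

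First, I would compute the mean $\mu$ of each $H_l^{(i,j)}$. Since $K_i(l)$ and $K_j(l)$ are independent and uniform on $\{1,\ldots,M\}$, conditioning on $K_i(l)$ gives
$$\Pr[K_i(l) = K_j(l)] = \sum_{m=1}^M \Pr[K_i(l)=m]\,\Pr[K_j(l)=m] = \frac{1}{M},$$
so each $H_l^{(i,j)}$ is Bernoulli with mean $\mu = 1/M$. Independence across $l$ follows from the joint independence of the full family $\{K_i(l):1\le i\le N,\,1\le l\le L\}$, since $H_l^{(i,j)}$ is a function of $(K_i(l), K_j(l))$ alone and distinct $l$'s involve disjoint sets of underlying variables.

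Next, I would rewrite the event $\{\sum_l H_l^{(i,j)} > \lambda L / 2\}$ in the normalized form required by Lemma \ref{HoeffIneq} by writing it as a subset of $\{\tfrac{1}{L}\sum_l H_l^{(i,j)} - \mu \ge c\}$ with $c = \lambda/2 - 1/M$. The assumption $M > 2/\lambda$ from \eqref{eq:M_and_lambda} ensures $c > 0$, and the hypothesis $\mu + c = \lambda/2 \le 1$ required by the lemma is immediate from $\lambda < 1/2$.

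Finally, plugging these parameters into Lemma \ref{HoeffIneq} yields the bound $2^{-2Lc^2/\ln 2}$. Since $\lambda \le 2$, this is in turn bounded by $2^{-L\lambda c^2/\ln 2}$, which gives the claimed inequality. There is no real obstacle here; the lemma is a one-line corollary of Hoeffding's inequality once the Bernoulli structure of the $H_l^{(i,j)}$ has been identified, and the slight gap between the exponent $2Lc^2$ that Hoeffding directly produces and the weaker exponent $L\lambda c^2$ stated in the lemma provides comfortable slack.
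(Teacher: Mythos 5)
Your proposal is correct and takes essentially the same route as the paper: identify the $H_l^{(i,j)}$ as i.i.d.\ indicators with mean $1/M$, apply Hoeffding's inequality (Lemma \ref{HoeffIneq}) with $c=\frac{\lambda}{2}-\frac{1}{M}>0$, and weaken the resulting exponent $2Lc^2$ to $\lambda L c^2$ using $\lambda<2$. The only cosmetic difference is that the paper first conditions on the realization of one coloring function $K_i$ to exhibit the conditional i.i.d.\ Bernoulli$(1/M)$ structure and then averages over that realization, whereas you verify the unconditional i.i.d.\ structure directly; both arguments are valid and yield the same bound.
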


\begin{proof}
    Without loss of generality, we set $i=1$ and $j=2$ and write $H_l$ instead of $H_l^{(1,2)}$. We will show the inequality stated in the lemma conditional on any fixed realization of $K_1$. Then
    \begin{align*}
        \Pr\left[\sum_{l=1}^L H_l > \frac{\lambda L}{2} \right]
        =\sum_{k_1}\Pr\left[\left.\sum_{l=1}^L H_l > \frac{\lambda L}{2}\right\vert K_1=k_1 \right]\Pr[K_1=k_1]
        \leq 2^{-\frac{L\lambda c^2}{\ln 2}},
    \end{align*}
    where the sum is over all possible realizations $k_1$ of $K_1$.
    
    Conditional on the event $K_1=k_1$, the random variables $H_1,\ldots,H_L$ are i.i.d.\ and satisfy $\mbb E[H_l\vert K_1=k_1]=\Pr[H_l=1\vert K_1=k_1]=1/M$. Thus we can apply Hoeffding's inequality and obtain
    \[
        \Pr\left[\left.\sum_{l=1}^L H_l > \frac{\lambda L}{2}\right\vert K_1=k_1 \right]
        \leq 2^{-\frac{L\lambda c^2}{\ln 2} },
    \]
    as claimed.
\end{proof}

It remains to choose the $K_i$ in such a way that \eqref{eq:large_Hamming} is satisfied for all pairs of distinct identities simultaneously. 

\begin{lemma}\label{lem:ex_good_fam}
    If 
    \begin{equation}\label{eq:N_L_condition}
        \frac{\lambda L c^2}{2\ln 2}>\log N,
    \end{equation}
    then there exists a family $k_1,\ldots,k_N$ of coloring functions satisfying \eqref{eq:large_Hamming} for all pairs $(i,j)$ of distinct identities.
\end{lemma}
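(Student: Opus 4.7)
The plan is a standard application of the probabilistic method to the random coloring functions $K_1,\ldots,K_N$ introduced just before Lemma \ref{lemmahof}. For each ordered pair $(i,j)$ of distinct identities I would first invoke Lemma \ref{lemmahof} to bound the probability that $\sum_l H_l^{(i,j)} > \lambda L/2$ by $2^{-L\lambda c^2/\ln 2}$. The "bad" events for different pairs are clearly not independent, so rather than pursue any sharper estimate I would simply union-bound over the fewer than $N^2$ ordered pairs $(i,j)$ with $i\neq j$.

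This gives an upper bound on the probability that \eqref{eq:large_Hamming} is violated for at least one pair of
\[
    N^2 \cdot 2^{-L\lambda c^2/\ln 2} = 2^{2\log N - L\lambda c^2/\ln 2}.
\]
The hypothesis \eqref{eq:N_L_condition} rearranges to $L\lambda c^2/\ln 2 > 2\log N$, which renders this failure probability strictly less than $1$. Some realization $(k_1,\ldots,k_N)$ of $(K_1,\ldots,K_N)$ must therefore satisfy \eqref{eq:large_Hamming} for every pair of distinct identities simultaneously, and any such realization serves as the required deterministic family.

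There is no serious obstacle here: the argument is essentially the textbook random-coding-plus-union-bound technique, and the factor $\tfrac{1}{2}$ on the left-hand side of \eqref{eq:N_L_condition} is precisely what is needed to absorb the quadratic union-bound loss of $N^2$. Everything delicate — in particular the conditioning on $K_1$ needed to invoke Hoeffding's inequality, and the non-negativity $c=\lambda/2 - 1/M > 0$ guaranteed by \eqref{eq:M_and_lambda} — has already been handled in the setup of Lemma \ref{lemmahof}.
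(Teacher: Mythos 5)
Your proposal is correct and follows essentially the same route as the paper: apply Lemma \ref{lemmahof} to each pair of distinct identities, union-bound over the (at most $N^2$, the paper uses $N(N-1)$) bad events, observe that condition \eqref{eq:N_L_condition} makes the total failure probability strictly less than $1$, and conclude that some realization of $K_1,\ldots,K_N$ satisfies \eqref{eq:large_Hamming} for all pairs simultaneously. The slight difference between $N^2$ and $N(N-1)$ is immaterial, so nothing further is needed.
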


\begin{proof}
    By the union bound and Lemma \ref{lemmahof}, we have
    \begin{align*}
        \Pr\left[\sum_{l=1}^L H_l^{(i,j)} > \frac{\lambda L}{2}\text{ for some }i\neq j \right]
        \leq N(N-1)\Pr\left[\sum_{l=1}^L H_l > \frac{\lambda L}{2} \right]
        \leq N(N-1)2^{-\frac{L\lambda c^2}{\ln 2} }.
    \end{align*}
    This probability is strictly smaller than 1 if \eqref{eq:N_L_condition} is satisfied. Hence there exists a realization $k_1,\ldots,k_N$ of the random coloring functions $K_1,\ldots,K_N$ such that \eqref{eq:large_Hamming} is satisfied for all distinct identities $i,j$.
\end{proof}

We now choose any $L$ satisfying \eqref{eq:N_L_condition}. Lemma \ref{lem:ex_good_fam} ensures the existence of a family $k_1,\ldots,k_N$ satisfying \eqref{eq:large_Hamming} for all distinct identities $i,j$. We use this family of coloring functions in our code construction described above.

\subsubsection{Error analysis}

We now bound the error probabilities $\mu_1^{(i)}$ and $\mu_2^{(i,j)}$ for our identification feedback code. We start with $\mu_1^{(i)}$. In order to simplify notation, we introduce the random variable $\Pi=\pi(Y_1,\ldots,Y_n)$. By the definition of the decoding set $\mc D_i$, 
\begin{align*}
    \mu_1^{(i)}
    &=W^{2n}_{P}({\setd}_i^c|f_i)\\
    &\overset{(a)}{\leq} \Pr\left\{\Pi=\infty \right\}+\mathbb{E}\left[\left. W^n_{P} \left( ( \mc D'_{k_i(\Pi)})^c | u_{k_i(\Pi)} \right) \right\vert \Pi\neq\infty \right] \\
    &\overset{(b)}{\leq} P[\overline{\mc E}]^n + \frac{\lambda}{2}\\
    &\overset{(c)}{\leq}\lambda,
\end{align*}
where $(a)$ follows from the memorylessness property of the channel, $(b)$ follows from the properties of the message transmission feedback code and of $\Pi$ and $(c)$ is due to the choice of $n$ in \eqref{eq:n_cond}.
    
Now let $i$ and $j$ be two distinct identities. Then
\begin{align*}
    &\mu_2^{(i,j)}\\
    &=W^{2n}_{P}(\mc D_j|f_i) \\
    &\leq\mbb E\left[\left.W_P^n\left(\mc D'_{k_j(\Pi)}\vert u_{k_i(\Pi)}\right)\right\vert\Pi\neq\infty\right]\\
    &\leq\sum_{l=1}^L\mbb E\left[\left.W_P\left(\mc D'_{k_j(l)}\vert u_{k_i(l)}\right)\right\vert\Pi=l\right]\Pr[\Pi=l\vert\Pi\neq\infty]\\
    & \overset{(a)}{=}\frac{1}{L}\sum_{l:k_i(l)\neq k_j(l)} \mathbb{E}\left[\left. W^n_{P} \left( \mc D'_{k_j(l)} | u_{k_i(l)} \right) \right\vert \Pi=l \right]+\frac{1}{L}\sum_{l:k_i(l)=k_j(l)} \mathbb{E}\left[\left. W^n_{P} \left( \mc D'_{k_j(l)}| u_{k_i(l)} \right) \right\vert \Pi=l\right] \nonumber\\
    & \overset{(b)}{\leq} \frac{\lambda}{2}+\frac{\lambda}{2}\\
    &=\lambda,
\end{align*}
where $(a)$ follows from the memorylessness property of the channel and the properties of $\pi$ and $(b)$ uses the upper bound on the error probability of the message transmission feedback code in the first summand and property \eqref{eq:large_Hamming} of the family of coloring functions in the second summand.

We have constructed an $(2n,N,\lambda,\lambda)$ identification feedback code satisfying the average power constraint $\Gamma$, where $n$ only depends on $\lambda$ and $N$ can be chosen arbitrarily large. In order to obtain an identification feedback code satisfying the peak power constraint $\Gamma$, all one has to do is to replace the message transmission feedback code used above by a message transmission feedback code satisfying the peak power constraint $\Gamma$ as well as \eqref{eq:M_and_lambda}.

\begin{remark}
    \begin{enumerate}
        \item Obviously, we are not required to employ message transmission feedback codes which really use the feedback. Message transmission codes without feedback are sufficient if the message transmission capacity of $W_P$ without feedback is positive, as in the case of the AWGN channel. Nevertheless, the identification feedback codes we have used in the proof of Theorem \ref{theorem:new_main_theorem} will still be feedback strategies.
        \item Ahlswede and Dueck \cite{IDFeedback} use the same construction in the achievability proof for their coding theorem on deterministic identification over DMCs with feedback. In order to achieve the maximal possible rate, they have to ensure that the length of the message transmission code is negligible with respect to the number of channel uses needed for common randomness generation. For this reason, they take $n$ channel uses for common randomness generation and $\sqrt n$ for the transmission of the colors. In our case, this matter is much less sensitive since our maximal rate is infinite.
    \end{enumerate}
\end{remark}

\subsection{Proof of Corollary \ref{corrolary}}

Not much work has to be done in order to prove the corollary. Again we only treat the case where we have an average power constraint $\Gamma$. Let $\varphi$ be an arbitrary rate function and $R$ a nonnegative real number. We will show that $R$ is an achievable rate w.r.t.\ $\varphi$ and with average power constraint $\Gamma$.

Let $0<\lambda<1/2$. By Theorem \ref{theorem:new_main_theorem}, for sufficiently large $n$ and any number $N$ of identities we can find an $(n,N,\lambda,\lambda)$ identification feedback code satisfying the average power constraint $\Gamma$. In particular, such a code exists for any $N\geq\varphi(nR)$. The proof is complete.

\section{Conclusions} \label{sec: conclusions}

In this paper we considered message identification via channels with non-discrete additive white noise in the presence of noiseless feedback and without local randomness. We showed that if the channel has a positive message transmission feedback capacity, for given error thresholds and sufficiently large blocklength we can construct arbitrarily large deterministic identification codes. This holds for both average and peak power constraints. This is a highly surprising result and shows that the addition of perfect feedback can result in a tremendous capacity gain compared with the non-feedback setting. For instance, it is known that the deterministic identification capacity in the case of Gaussian additive white noise without feedback is finite for the scaling functions $\varphi_2$ defined above \cite{deterministicDMC}. Even when allowing randomized encoding, the capacity still is finite without feedback, although for a different scaling function, $\varphi_3$. 

Moreover, for traditional message transmission, feedback does not increase the capacity of the channel. Thus, the identification problem here shows a completely new phenomenon: By adding noiseless feedback to the channel, the capacity becomes infinitely large, regardless of the scaling. On closer inspection, it is due to the fact that feedback allows the generation of infinite common randomness between the sender and the receiver. This common randomness can be used to make the ambiguity in the receiver's decisions arbitrarily small no matter how many identities need to be distinguished.

Motivated by the drastic effects produced by the common randomness obtained from the perfect feedback in the model treated in this work, it would be interesting to investigate common randomness generation from continuous correlated sources in the future. Another related problem not treated so far is identification in the presence of noisy feedback over continuous channels. In this case, we do not expect the effect of scale-independent infinite capacity observed for noiseless feedback to occur.

\section*{Acknowledgments}

M.~Wiese was supported by the German Research Foundation (DFG) within Germany’s Excellence Strategy EXC-2092 CASA-390781972 and within the Gottfried Wilhelm Leibniz Prize under Grant BO 1734/20-1. W.~Labidi was supported by the German Federal Ministry of Education and Research (BMBF) within the national initiative for “Post Shannon Communication (NewCom)” through the project “Basics, simulation and demonstration for new communication models” under Grant 16KIS1003K and within the national initiative for ``Molecular Communications'' (MAMOKO) under Grant 16KIS0914. C.~Deppe was supported by BMBF within NewCom through the project “Coding theory and coding methods for new communication models” under Grant 16KIS1005 and within the national initiative on 6G Communication Systems through the research hub 6G-life under Grant 16KISK002. H.~Boche was supported in part by BMBF within 6G-life under Grant 16KISK002, within NewCom under Grant 16KIS1003K, by the Bavarian Ministry of Economic Affairs, Regional Development and Energy as part of the project 6G Future Lab Bavaria, and by DFG within the Gottfried Wilhelm Leibniz Prize under Grant BO 1734/20-1.

\bibliographystyle{IEEEtranS}
\bibliography{../definitions,../references}

\end{document}